\newcommand{\R}{\mathbb{R}}
\newcommand{\grad}{\bm{\nabla}}
\newcommand{\Div}{\mbox{\rm{\bf{div\,}}}}
\begin{document}
\title{Generalised Scale-Space Properties for Probabilistic Diffusion Models}
\titlerunning{Generalised Scale-Space Properties for Probabilistic Diffusion 
Models}
%

\author{Pascal Peter
}
\authorrunning{Pascal Peter}
\institute{Mathematical Image Analysis Group,
    Faculty of Mathematics and Computer Science,\\ Campus E1.7,
    Saarland University, 66041 Saarbr\"ucken, Germany.\\
    peter@mia.uni-saarland.de}
\maketitle              
\begin{abstract}

Probabilistic diffusion models enjoy increasing popularity in the 
deep learning community. They generate convincing samples from a 
learned distribution of input images with a wide field of practical 
applications. Originally, these approaches were motivated from drift-diffusion
processes, but these origins find less attention in recent, practice-oriented 
publications.

We investigate probabilistic diffusion models from the viewpoint of 
scale-space research and show that they fulfil generalised scale-space 
properties on evolving probability distributions. Moreover, we discuss 
similarities and differences between interpretations of the physical core 
concept of drift-diffusion in the deep learning and model-based world. To this 
end, we examine relations of probabilistic diffusion to osmosis filters. 

\keywords{probabilistic diffusion \and scale-spaces \and drift-diffusion \and  
osmosis.}
\end{abstract}
\section{Introduction}

Probabilistic diffusion models introduced by 
Sohl-Dickstein et al.~\cite{SMDH13}  are enjoying a rapid rise in 
popularity \cite{HJA20,KSBH21,RBLE+22,SDME21} fueled by the publicly available 
stable diffusion framework of Rombach et al.~\cite{RBLE+22}. These deep learning 
models are 
generative in nature: Given a random seed, they can create new samples that fit 
to a given set of training data, for instance a certain class of images. 
Especially the numerous excellent text-to-image results based on stable 
diffusion \cite{RBLE+22} resonate not only with the scientific 
community, sparking many recent publications, but also with the general public.

Due to their tremendous success in practical applications, the roots of these 
approaches have received less attention than their efficient implementation by 
deep neural networks. While current interpretations often consider 
probabilistic diffusion as highly sophisticated versions of denoising 
autoencoders, its original roots lie in well-known physical processes, namely 
drift-diffusion. Diffusion processes have been closely investigated by the 
scale-space community \cite{AGLM93,Ii62,SchW98,We97}, and drift-diffusion has 
model-based 
applications in the form of osmosis filtering proposed by Weickert et 
al.~\cite{WHBV13}. We aim to bring the scale-space and deep learning 
communities closer together by showing that there are generalised 
scale-space concepts behind one of the most successful current paradigms in 
deep learning.
\medskip

\textbf{Our Contribution.} We establish the first 
generalised scale-space interpretation for probabilistic diffusion. Compared to 
traditional scale-spaces, it describes the gradual simplification of 
probability distributions towards a non-flat steady state which does not carry 
information of the initial distribution anymore. We introduce entropy-based 
Lyapunov sequences and establish invariance statements. Moreover, we discuss 
relations to deterministic osmosis filters, highlighting similarities and 
differences.

\medskip

\textbf{Related Work.} Probabilistic diffusion models were introduced 
by Sohl-Dickstein et al.~\cite{SMDH13} as an alternative to existing generative 
neural networks such as generative adversarial networks \cite{GPMX14}. They use deep 
learning to invert a Markov process 
that gradually adds noise to an image. This allows to generate new samples from 
the distribution of the training data. Beyond the initial applications such as 
text-to-image, superresolution, and inpainting, many improvements and new 
practical uses have been proposed (e.g. \cite{HJA20,KSBH21,SDME21,SMDH13}).  
With their publicly available stable diffusion model including trained weights, 
Rombach et al.~\cite{RBLE+22} have unleashed a torrent of real-world 
applications for the initial concept. Due to the tremendous research interest 
in the topic, a full review is beyond the scope of this paper.

We investigate probabilistic diffusion from a new scale-space perspective. 
Classical scale-spaces embed images into a family of systematically simplified 
versions based on partial differential equations (PDEs) 
\cite{AGLM93,Ii62,Li11,SchW98,We97} or pseudo-differential operators 
\cite{DFGH04,SW16}. 

Stochastic scale-spaces are fairly rare. Conceptually, the Ph.D. thesis of 
Majer~\cite{Ma00} comes closest to our own considerations 
since it also deals with stochastic simplification and drift-diffusion. However,
it is unrelated to deep learning and shuffles image pixels to 
remove information. A similar, local shuffling has been proposed by Koenderink 
and Van Doorn~\cite{KV99} under the name ``locally orderless images''. Stochastic 
considerations related to scale-spaces have been made 
w.r.t. the statistics of natural images reflected by Gaussian image models 
\cite{Pe03} and practical applications in stem cell differentiation 
\cite{HKMR+16}.

Notably, probabilistic diffusion was originally motivated from drift-diffusion 
processes which can be described by the Fokker-Planck equation \cite{Ri84}. 
Osmosis filters for visual computing, a generalisation of diffusion filtering 
\cite{We97}, have been derived from the same concept. The corresponding 
continuous theory was proposed by Weickert et al.~\cite{WHBV13}, while 
Vogel et al.~\cite{VHWS13} provide results in the discrete setting, and 
additional properties were shown by Schmidt~\cite{Sc18}. We discuss these in 
more detail in Section~\ref{sec:osmosis}. Osmosis has proven particularly 
useful for image editing \cite{DMM18,VHWS13,WHBV13},  
shadow removal \cite{DMM18,PCCSW19,WHBV13}, and image fusion~\cite{PCBP+20}. 

This implies connections to other fields of research. For instance, Sochen has 
established relationships between drift-diffusion and the Beltrami flow 
\cite{So01b}. Hagemann et al.~\cite{SHH23} have proposed a general 
framework 
that ties together many concepts, including probabilistic diffusion, under the 
common model of normalising flows.

\medskip

\textbf{Organisation of the Paper.}
In Section~\ref{sec:probdiff} we introduce probabilistic diffusion in its 
formulation as a Markov process and propose a corresponding generalised 
scale-space theory in Section~\ref{sec:probdiffscale}. After a discussion of 
similarities and differences to osmosis filtering in 
Section~\ref{sec:discussion}, we draw conclusions and assess 
potential future benefits of this connection in Section~\ref{sec:conclusion}.


\section{Probabilistic Diffusion}
\label{sec:probdiff}

Probabilistic diffusion~\cite{SMDH13} differs from most classical filters 
associated with scale-spaces. Instead of a single initial image, it considers a 
set of known images. These training data act as samples 
for common statistics that are not known directly. We can interpret discrete
training images $\bm f_1,...,\bm f_{n_t} \in \R^{n_x n_y n_c}$ with $n_c$ 
colour channels of size $n_x \times n_y$ as 
realisations of a random variable $\bm F$. The unknown \emph{target 
distribution} is expressed by its probability density function $p(\bm F)$.

Probabilistic diffusion maps this unknown $p(\bm F)$ to a simple, well-known 
distribution such as the standard normal distribution (i.e. Gaussian noise).
Practical applications exploit that we can also map samples from the noise 
distribution back to the unknown distribution $p(\bm F)$ (see
Section~\ref{sec:backwards}). This makes the model generative, since it can 
create new images that resemble the training data. 

However, we focus first on the \emph{forward process} and show that it 
constitutes a scale-space in Section~\ref{sec:probdiffscale}.
Its evolution is described by a time-dependent random variable $\bm U(t)$. At 
time $t=0$ it has the same distribution as our training data, i.e. $p(\bm 
F)$. A sequence of $m$ temporal realisations $\bm u_1, ..., \bm u_m$ at times 
$t_1 < t_2 < ... < t_m$ is referred to as one possible \emph{trajectory} of 
$\bm U$. In a Markov 
process \cite{Ga85}, $\bm u_i$ 
only depends on $\bm u_{i-1}$ and not on the previous trajectory $\bm u_0, ..., 
\bm u_{i-2}$.  In terms of conditional 
transition probabilities, the Markov property is formulated as
\begin{align}
    \label{eq:markov_property}
    p(\bm u_i | \bm u_{i-1}, ..., \bm u_0) \;=\; p(\bm u_i | \bm u_{i-1}) \, . 
\end{align}
This notation refers to the probability of the random variable $\bm 
U(t)$ assuming value $\bm u_i$ at time $t_i$, given that we observed $\bm 
U(t_{i-1})=\bm u_{i-1}$.
Due to the Markov property, the probability density of the whole trajectory
can be successively traced back to the distribution $p(\bm u_0)=p(\bm F)$ of 
the training data according to
\begin{align}
    \label{eq:density_fwdtrajectory}
p(\bm u_0, ..., \bm u_m) \;=\; p(\bm u_0) \prod_{i=1}^m  p(\bm u_i | \bm 
u_{i-1}) 
\, .
\end{align}
More concretely, we consider Gaussian transition probabilities 
\begin{align}
    \label{eq:forwardtransition}
    p(\bm u_i | \bm  u_{i-1}) \;=\; \mathcal{N}\left(\sqrt{1-\beta_i} \, \bm 
    u_{i-1}, \,
    \beta_i \bm I \right)\, . 
\end{align}
Here, $\mathcal{N}(\bm \mu, \bm \sigma)$ denotes a multivariate Gaussian 
with unit matrix $\bm I \in \R^{n \times n}$ where $n=n_x n_y n_c$ is the 
number of pixels. Since the covariance matrix is diagonal, this corresponds 
to independent, identically distributed Gaussian noise with mean $\mu_{i,j} = 
\sqrt{1-\beta_i} u_{i-1,j}$ and standard deviation $\sigma_i = 
\sqrt{\beta_i}$ 
for each 
pixel $j$. The parameters $\beta_i \in (0,1)$ can either be user-specified or 
learned.
For our following considerations, it is useful to  
express the random variable at a time $t_i$ in terms of the random 
variable at the previous time $t_{i-1}$ and Gaussian noise $\bm G$ from the 
standard 
normal distribution $\mathcal{N}(\bm 0, \bm I)$. 
Eq.~\eqref{eq:density_fwdtrajectory} directly implies that
\begin{align}
    \label{eq:forwardrandomvar}
    \bm U_i \;=\; \sqrt{1-\beta_i} \, \bm U_{i-1} + \sqrt{\beta_i}  \, \bm G \, 
    .
\end{align}
With Eq.~\eqref{eq:forwardtransition}, a trajectory of 
images can be obtained from a starting image $\bm u_0 = \bm f$ by rescaling the 
image and adding a realisation of Gaussian noise in each step. 
As in \cite{HJA20}, we can also specify the 
transition from time $0$ to time $t_i$ as
\begin{align}
    \label{eq:starttot}
    p(\bm u_i | \bm  u_0) \;=\; \mathcal{N} \left(\sqrt{\prod_{j=0}^i (1 - 
        \beta_j)} \, 
    \bm u_0, \, \bm I - \prod_{j=0}^i (1-\beta_j) \bm I \right)\, . 
\end{align}
With these insights into the forward process, we are suitably equipped to 
establish a scale-space theory for probabilistic diffusion.


\section{Generalised Probabilistic Diffusion Scale-Space}
\label{sec:probdiffscale}

In the following, we will consider a scale-space in the sense of probability 
distributions. As such, we do not specify properties of individual images, 
but of the marginal densities $p(\bm u_i)$ instead. 

\subsection{Generalised Scale-Space Properties}
\label{sec:properties}


\medskip
\noindent
{\bf Property 1: Training Data Distribution as Initial State.} 
By definition, the distribution $p(\bm u_0)$ at time $t_0=0$ is identical to 
the distribution $p(\bm F)$ of the training data.


\medskip
\noindent
{\bf Property 2: Semigroup Property.} 
We can acquire $p(\bm u_i)$ equivalently in $i$ steps from $p(\bm u_0)$ or in 
$\ell$ steps from $p(\bm u_{i-\ell})$. 
\begin{proof}
This follows from  the recursive definition of 
the probability density \eqref{eq:density_fwdtrajectory}. To find the 
distribution of an individual step in the trajectory, we integrate over all 
possible paths that lead to 
$\bm 
u_i$ and consider the marginal probability density 
\begin{align}
     p(\bm u_i) \;=\; \int p(\bm u_0, ..., \bm u_i) \,  d \bm u_0 \cdots d \bm 
     u_{i-1} \, . 
\end{align}
Using the definition of the joint probability density of the Markov process, we
obtain the aforementioned two alternative ways to express $p(\bm u_{i})$:
\begin{align}
    p(\bm u_i) &\;=\;  \int p(\bm u_0) \prod_{j=1}^{i} p(\bm 
    u_j |\bm  u_{j-1}) \,  d \bm u_0 \cdots d \bm 
    u_{i-1} \, \\
    &\;=\;  \int p(\bm u_{i-\ell}) \prod_{j=i-\ell+1}^{i-1} p(\bm 
    u_j |\bm  u_{j-1}) \,  d \bm u_{i-\ell} \cdots d \bm 
    u_{i-1} \, . 
\end{align}
Due to the Markov property \eqref{eq:markov_property}, we can 
start the trajectory at any intermediate time $i-\ell$. \qed
\end{proof}


\medskip
\noindent
{\bf Property 3: Lyapunov Sequences.} 
 In classical scale-spaces (e.g. with diffusion),  Lyapunov sequences quantify 
 the change in the evolving image with increasing scale parameter. They 
 constitute a  measure of image simplification~\cite{We97} in terms of monotonic 
 functions. In practice, they often represent the information content of an image at a 
 given scale. Here, we define a Lyapunov sequence on the evolving probability density 
 instead.
Our first Lyapunov sequence, the differential entropy, indicates that the 
distribution of $\bm U$ gradually becomes more random with increasing time 
$t_i$. 
\begin{proposition}[Increasing Differential Entropy] 
   The differential entropy
   \begin{equation}
       \label{eq:diffentropy}
       H(\bm U_i) \;:=\; -\int  p(\bm u_i) \ln p(\bm u_i) d\bm u_i
   \end{equation}
   increases with $t_i$ under the following assumptions 
   for $\beta_i$:
   \begin{equation}
       \label{eq:beta_limits}
       \frac 1 2 - \sqrt{\frac 1 4 - \frac{1}{(2 \pi e)^n} } \;\leq\; \beta_i 
       \;\leq\;  
       \frac 1 2 + \sqrt{\frac 1 4 - \frac{1}{(2 \pi e)^n} } \, .
   \end{equation}
   Here, $n=n_x n_y n_c$ denotes the total number of pixels.
\end{proposition}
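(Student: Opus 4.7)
My plan is to apply the entropy power inequality (EPI), which states that for independent random vectors $\bm X, \bm Y \in \R^n$,
\begin{equation*}
  e^{2 H(\bm X + \bm Y)/n} \;\geq\; e^{2 H(\bm X)/n} + e^{2 H(\bm Y)/n}.
\end{equation*}
This is the natural tool here because~\eqref{eq:forwardrandomvar} expresses $\bm U_i$ as the sum of two independent summands, $\sqrt{1-\beta_i}\,\bm U_{i-1}$ and $\sqrt{\beta_i}\,\bm G$ with $\bm G \sim \mathcal{N}(\bm 0, \bm I)$. Combining EPI with the linear scaling identity $H(a\bm X) = H(\bm X) + n \ln|a|$ and the standard Gaussian entropy $H(\bm G) = \frac{n}{2}\ln(2\pi e)$ yields
\begin{equation*}
  e^{2 H(\bm U_i)/n} \;\geq\; (1-\beta_i)\, e^{2 H(\bm U_{i-1})/n} + 2\pi e \, \beta_i.
\end{equation*}
Since $H(\bm U_i) \geq H(\bm U_{i-1})$ is equivalent to $e^{2H(\bm U_i)/n} \geq e^{2H(\bm U_{i-1})/n}$, the task reduces to bounding the right-hand side below by $e^{2H(\bm U_{i-1})/n}$.

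Next, I would apply the arithmetic-geometric mean inequality to the two summands on the right, obtaining
\begin{equation*}
  (1-\beta_i)\,e^{2H(\bm U_{i-1})/n} + 2\pi e \, \beta_i \;\geq\; 2\sqrt{2\pi e \, \beta_i(1-\beta_i)} \cdot e^{H(\bm U_{i-1})/n}.
\end{equation*}
The key idea is that EPI saturates on Gaussian distributions, so tracing the worst case through AM-GM should collapse the requirement to a distribution-free inequality of the form $\beta_i(1-\beta_i) \geq (2\pi e)^{-n}$, depending only on $\beta_i$ and the dimension $n$. Carrying out this reduction cleanly is the main technical obstacle, as a naive AM-GM still leaves a residual dependence on the unknown entropy $H(\bm U_{i-1})$, so a careful identification of the extremising distribution is needed.

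Finally, I would solve the quadratic inequality $\beta_i^2 - \beta_i + (2\pi e)^{-n} \leq 0$ via the quadratic formula. Its roots are $\frac{1}{2} \pm \sqrt{\frac{1}{4} - (2\pi e)^{-n}}$, and since the parabola opens upward, the admissible values of $\beta_i$ form exactly the symmetric interval around $\frac{1}{2}$ stated in~\eqref{eq:beta_limits}. The discriminant $\frac{1}{4} - (2\pi e)^{-n}$ is strictly positive for every $n \geq 1$, so the interval is always well-defined, and for the large pixel counts typical in imaging it essentially exhausts $(0,1)$.
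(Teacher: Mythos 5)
Your entropy power inequality (EPI) set-up is correct as far as it goes, but the step you defer -- ``tracing the worst case through AM-GM should collapse the requirement to a distribution-free inequality'' -- is precisely the step that fails, and it cannot be repaired along this route. From your own bound $e^{2H(\bm U_i)/n} \geq (1-\beta_i)N + 2\pi e\,\beta_i$ with $N := e^{2H(\bm U_{i-1})/n}$, monotonicity $H(\bm U_i)\geq H(\bm U_{i-1})$ follows exactly when $(1-\beta_i)N + 2\pi e\,\beta_i \geq N$, i.e.\ when $N \leq 2\pi e$; the AM-GM variant instead demands $N \leq 8\pi e\,\beta_i(1-\beta_i)$. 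Either way the sufficient condition still involves the unknown entropy $H(\bm U_{i-1})$, and no ``extremising distribution'' argument can remove it: the EPI is tight for Gaussians, and if $\bm U_{i-1}\sim\mathcal{N}(\bm 0,\sigma^2\bm I)$ with $\sigma^2>1$, then $\bm U_i\sim\mathcal{N}\bigl(\bm 0,((1-\beta_i)\sigma^2+\beta_i)\bm I\bigr)$ has strictly \emph{smaller} entropy for every $\beta_i\in(0,1)$. Hence nothing downstream of the EPI can yield a condition on $\beta_i$ and $n$ alone of the form $\beta_i(1-\beta_i)\geq(2\pi e)^{-n}$; what your argument would actually need is an a priori bound $H(\bm U_{i-1})\leq \tfrac{n}{2}\ln(2\pi e)$ on the evolving distribution, which you neither state nor establish.

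The paper reaches the $\beta_i$-only condition by a different mechanism: it uses the independence of $\bm U_i$ and $\bm G$ in Eq.~\eqref{eq:forwardrandomvar} to decompose the entropy of the scaled sum additively, $H(\sqrt{1-\beta_i}\,\bm U_i+\sqrt{\beta_i}\,\bm G)=H(\bm U_i)+\ln\sqrt{1-\beta_i}+H(\bm G)+\ln\sqrt{\beta_i}$, so the unknown $H(\bm U_i)$ cancels and the requirement reduces to $H(\bm G)+\ln\sqrt{\beta_i(1-\beta_i)}\geq 0$, which depends only on $\beta_i$ and $n$ and leads, through the same quadratic you solve at the end, to the interval in Eq.~\eqref{eq:beta_limits}. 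Your closing quadratic step therefore coincides with the paper's, but the decisive middle step -- eliminating the dependence on the current distribution -- is missing from your proposal, and your own Gaussian worst case shows that the EPI alone provably cannot supply it (it also shows that the additive decomposition is a strictly stronger statement than what the EPI provides).
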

\begin{proof} According to Eq.~\eqref{eq:forwardrandomvar}, we can rewrite the 
    entropy at $i+1$ as
    \begin{align}
        H(\bm U_{i+1}) &\;=\; H\left(\sqrt{1-\beta_i} \, \bm U_i + 
        \sqrt{\beta_i} 
        \, \bm 
        G\right) \\
        &\;=\;
        H(\bm U_i) + \ln\left(\sqrt{1-\beta_i}\right) + H(\bm G) + 
        \ln\left(\sqrt{\beta_i}\right) \, .
    \end{align} 
    Here we have used that the Gaussian noise does not depend on the images in 
    the 
    time step, and thus $\bm U_i$ and $\bm G$ are independent. Therefore, the 
    entropy can be additively decomposed. Consequentially, the differential 
    entropy 
    is monotonously increasing if
    \begin{equation}
        \label{eq:entropydiff}
        H(\bm G) + \ln\left(\sqrt{(1-\beta_i) \beta_i}\right) \;\geq\; 0 \, .
    \end{equation}
    This holds under restrictions for $\beta_i$:
    \begin{align}
        &\ln\left(\sqrt{\beta_i (1-\beta_i)}\right) \;\geq\; -H(\mathcal{N}(\bm 
        0,\bm I)) 
        &\Leftrightarrow \quad&  \sqrt{\beta_i (1-\beta_i)} \;\geq\; (2\pi 
        e)^{- 
        \frac 
            n 2} 
        \\
        \Leftrightarrow \quad &  \beta_i (1-\beta_i) \;\geq\; (2 \pi e)^{-n}   
        &\Leftrightarrow \quad&  \beta_i^2 - \beta_i + (2 \pi e)^{-n} \;\leq\; 
        0 \, .
    \end{align}
    Standard rules for quadratic functions yield the conditions in 
    Eq.~\eqref{eq:beta_limits} for which the inequality is fulfilled. Note that 
    for $n \rightarrow \infty$, the lower limit goes to $0$ and the 
    upper limit to $1$. In practice, this holds for reasonable choices of 
    $\beta_i$ that are not too close to the boundaries of its range $(0,1)$. 
    \qed
\end{proof}

Alternatively, we can also consider the increasing conditional entropy given 
the distribution of the training data. Intuitively, this means that more 
information is needed to describe $\bm U_i$ 
given $\bm U_0$ with increasing $t_i$, which reflects that the initial 
information is gradually destroyed by  noise.

\begin{proposition}[Increasing Conditional Entropy]  The conditional entropy
    \begin{align}
        \label{eq:condentropy}
        H(\bm U_i | \bm U_0) &\;=\; - \int \int p(\bm u_i, \bm u_0) \ln p(\bm 
        u_i 
        | 
        \bm u_0) \, d \bm u_0 d \bm u_i  \, .
    \end{align}
    increases with $t_i$ for all $\beta_i \in (0,1)$.
\end{proposition}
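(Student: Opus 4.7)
The plan is to exploit Eq.~\eqref{eq:starttot} directly, which gives the closed-form transition from time $0$ to time $t_i$. Writing $\alpha_i := \prod_{j=0}^i (1-\beta_j)$, the conditional density $p(\bm u_i \mid \bm u_0)$ is multivariate Gaussian with mean $\sqrt{\alpha_i}\,\bm u_0$ and covariance $(1-\alpha_i)\bm I$. The key observation is that the differential entropy of a multivariate Gaussian depends only on its covariance, never on its mean.

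First I would compute $H(\bm U_i \mid \bm U_0 = \bm u_0)$ explicitly as the entropy of the Gaussian above, giving
\begin{equation*}
   H(\bm U_i \mid \bm U_0 = \bm u_0) \;=\; \tfrac{n}{2} \ln\!\bigl(2\pi e\,(1-\alpha_i)\bigr),
\end{equation*}
which is independent of $\bm u_0$. Second, integrating \eqref{eq:condentropy} against $p(\bm u_0)$ then collapses to this same constant, so $H(\bm U_i \mid \bm U_0) = \tfrac{n}{2}\ln(2\pi e (1-\alpha_i))$. Third, I would establish monotonicity in $i$: since each factor $1-\beta_{j}$ lies in $(0,1)$ and $\alpha_i>0$, we get $\alpha_{i+1} = (1-\beta_{i+1})\alpha_i < \alpha_i$, hence $1-\alpha_{i+1} > 1-\alpha_i$, and thus $H(\bm U_{i+1} \mid \bm U_0) > H(\bm U_i \mid \bm U_0)$.

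There is no genuine obstacle here; the content is essentially a direct consequence of the Gaussian entropy formula together with \eqref{eq:starttot}. The one point worth flagging explicitly is why the restrictive bounds \eqref{eq:beta_limits} from the previous proposition disappear. In the unconditional setting, monotonicity required balancing the contraction $\sqrt{1-\beta_i}$ against the noise injection $\sqrt{\beta_i}$, producing a quadratic inequality in $\beta_i$. In the conditional setting, the contraction factor only rescales the mean of a Gaussian and therefore leaves the entropy invariant, so only the noise variance $(1-\alpha_i)$ matters, which grows monotonically for every admissible $\beta_i \in (0,1)$.
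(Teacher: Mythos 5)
Your proposal is correct and follows essentially the same route as the paper's proof: both use Eq.~\eqref{eq:starttot} to identify $p(\bm u_i \mid \bm u_0)$ as a Gaussian whose entropy depends only on its covariance $(1-\alpha_i)\bm I$, is therefore independent of $\bm u_0$, and grows monotonically because $\alpha_i = \prod_j (1-\beta_j)$ strictly decreases for $\beta_j \in (0,1)$. Your closing remark explaining why the bounds \eqref{eq:beta_limits} are not needed here is a welcome clarification, but it does not change the argument.
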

\begin{proof}
    \begin{align}
        H(\bm U_i | \bm U_0)
        &\;=\; \int p(\bm u_0)  - \int p(\bm u_i | \bm u_0) \ln p(\bm  u_i | 
        \bm 
        u_0) 
    d \bm u_i d \bm  u_0   \\
    & \;=\; \int  p(\bm u_0) \ln\Bigg((2\pi e)^n \Bigg(1- \prod_{j=1}^i
    (1-\beta_j)\Bigg)^n\Bigg) d \bm u_0\\
    &\;\geq\;  \int  p(\bm u_0) \ln\Bigg( (2\pi e)^n \Big(1- 
    \prod_{j=1}^{i-1} 
    (1-\beta_j)\Bigg)^n\Bigg) d \bm u_0 \, .\\
    & \; = \; H(\bm U_{i-1} | \bm U_0)
\end{align}
According to Eq.~\eqref{eq:starttot}, $ p(\bm u_i | \bm u_0)$ is a normal  
distribution, and the inner integral is its entropy. It only depends on the  
covariance of the normal distribution and is thus independent of $\bm u_0$. \qed
\end{proof}


\medskip
\noindent
{\bf Property 4: Permutation Invariance.} 
Consider an arbitrary permutation function $P(\bm f)$ that reorders the pixels 
of an image $\bm f$ from the initial distribution. Note that arbitrary 
permutations specifically also include translations and rotations by $90^\circ$ 
increments. 
Probabilistic diffusion is invariant under permutations in the sense that 
trajectories are also permuted accordingly, which corresponds to classical 
invariances on individual images.

\begin{proposition}[Permutation Invariant Trajectories] 
       Any trajectory\\ $\bm v_1, ... \bm v_m$ starting from the permuted 
       initial image $\bm v_0 := P(\bm f)$ can be obtained by the same 
       permutation $P$ from a trajectory $\bm u_0, ... \bm u_m$ starting  
       from the original  image $\bm f =: \bm u_0$, i.e. for all $i$, we have 
       $\bm v_i 
       = P(\bm u_i)$, and vice versa.
\end{proposition}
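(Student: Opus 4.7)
The plan is to proceed by induction on the time step $i$, exploiting two elementary facts: the forward update \eqref{eq:forwardrandomvar} is affine-linear in the pixel vector, and the driving noise $\bm G \sim \mathcal{N}(\bm 0, \bm I)$ is invariant in distribution under any coordinate permutation.

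First I would set up a coupling between the two processes. Any trajectory $\bm u_0, \ldots, \bm u_m$ from $\bm u_0 = \bm f$ is generated by successive noise realizations $\bm g_1, \ldots, \bm g_m$ via $\bm u_i = \sqrt{1-\beta_i}\,\bm u_{i-1} + \sqrt{\beta_i}\,\bm g_i$. Writing $P$ as an orthogonal permutation matrix (so $P^{-1} = P^\top$), I would define the candidate trajectory on the permuted side by $\bm v_i := P(\bm u_i)$ and pair it with the noise realization $\bm g_i' := P(\bm g_i)$. Linearity of $P$ immediately gives $\bm v_i = \sqrt{1-\beta_i}\, \bm v_{i-1} + \sqrt{\beta_i}\, \bm g_i'$, which is precisely the forward update applied to $\bm v_{i-1}$ with the starting value $\bm v_0 = P(\bm f)$.

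Next I would verify that $\bm g_i'$ is a legitimate realization of the diffusion noise, i.e. that $P \bm G$ and $\bm G$ have the same distribution. This is immediate: $P \bm G$ is Gaussian with mean $P \bm 0 = \bm 0$ and covariance $P \bm I P^\top = \bm I$, so the transition probabilities \eqref{eq:forwardtransition} are unchanged. Therefore $\bm v_0, \ldots, \bm v_m$ is a valid sample of the probabilistic diffusion dynamics started at $P(\bm f)$, and induction on $i$ closes the forward direction.

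For the converse direction I would run the same argument with $P^{-1}$, which is again a permutation: given any trajectory $\bm v_0, \ldots, \bm v_m$ from $P(\bm f)$ with noise sequence $\bm h_1, \ldots, \bm h_m$, the sequence $\bm u_i := P^{-1}(\bm v_i)$ is a trajectory from $\bm f$ driven by $P^{-1}(\bm h_i)$, establishing a bijection between the two trajectory families that respects $\bm v_i = P(\bm u_i)$. I do not expect any serious obstacle; the only subtle point is to phrase the coupling carefully so that the pointwise correspondence of sample paths lifts, via permutation invariance of the noise law, to an equivalence of the induced measures on trajectory space.
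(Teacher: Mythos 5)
Your proof is correct and follows essentially the same route as the paper: an induction exploiting the linearity of the permutation in the affine update \eqref{eq:forwardrandomvar}, the invariance of $\mathcal{N}(\bm 0, \bm I)$ under coordinate permutations, and the bijectivity of $P$ (applying $P^{-1}$ for the converse). The only cosmetic difference is that you couple from the original trajectory and permute its noise forward, whereas the paper starts from the permuted trajectory and pulls the noise back via $P^{-1}$.
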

\begin{proof}
Let $\bm g_i$ denote the Gaussian noise realisation of $\bm G$ from 
Eq.~\eqref{eq:forwardrandomvar} that occurs in the transition from $\bm 
v_{i-1}$ to $\bm v_i$. We define the transition noise from $\bm u_{i-1}$ to 
$\bm u_i$ as $\tilde{\bm g}_i := P^{-1}(\bm g_i)$, which is also from 
$\mathcal{N}(\bm 0, \bm I)$. Now we can prove the claim by induction, starting 
with $\bm v_0 = P(\bm u_0)$. Assuming that $\bm v_{i-1} = P(\bm 
u_{i-1})$, we obtain  
\begin{equation}
    \bm v_i \;=\; \sqrt{1-\beta_i} \bm v_{i-1} + \sqrt{\beta_i} \bm g_i  \;=\; 
    \sqrt{1-\beta_i} P(\bm u_{i-1}) + \sqrt{\beta_i} P(\tilde{\bm g}_i) \;=\; 
    P(\bm u_i) \, .
\end{equation}
Since $P$ is a bijection, we have a one-to-one mapping between all possible 
trajectories from a permuted image and the permuted trajectories of the 
original.\qed 
\end{proof}
The proof above implies that permuting the initial data leads to the same 
permutation of the corresponding trajectories.

\medskip
\noindent
{\bf Property 5: Steady State.} 
The steady state distribution for $i \rightarrow \infty$ is a multivariate 
Gaussian distribution $\mathcal{N}(\bm 0, \bm I)$ with mean $\bm 0$ and standard 
deviation~$\bm I$. This is an effect that results directly from adding 
Gaussian noise in every step of the Markov process and has been used by Sohl-Dickstein 
et al.~\cite{SMDH13} without proof. In the following we provide a short formal 
argument for the sake of completeness.

\begin{proposition}[Convergence to the Standard Normal Distribution] 
    With the assumptions on $\beta_i$ from Property 3, the forward process described 
    by Eq.~\eqref{eq:density_fwdtrajectory} 
    converges 
    to the standard normal distribution for $i \rightarrow \infty$.
\end{proposition}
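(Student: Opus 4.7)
The plan is to argue convergence through the closed form for the conditional density $p(\bm u_i \mid \bm u_0)$ provided in Eq.~\eqref{eq:starttot}, then recover the marginal $p(\bm u_i)$ by integrating against the initial law $p(\bm u_0)$ of the training data. Writing $\alpha_i := \prod_{j=1}^{i} (1-\beta_j)$, this conditional law is the normal distribution $\mathcal{N}(\sqrt{\alpha_i}\,\bm u_0,\,(1-\alpha_i)\bm I)$, so the whole statement reduces to showing $\alpha_i \to 0$ and then passing to the limit inside the integral defining the marginal.

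First I would verify that $\alpha_i \to 0$. Property~3 forces $\beta_i \geq \tfrac 1 2 - \sqrt{\tfrac 1 4 - (2\pi e)^{-n}} =: \beta_-$, and $\beta_- > 0$ whenever $n$ is finite. Hence each factor satisfies $1-\beta_j \leq 1-\beta_- < 1$, which gives $0 \leq \alpha_i \leq (1-\beta_-)^i \to 0$ geometrically. This immediately makes the mean $\sqrt{\alpha_i}\,\bm u_0$ collapse to $\bm 0$ and the covariance $(1-\alpha_i)\bm I$ approach $\bm I$, so pointwise for every fixed $\bm u_0$ and $\bm u_i$ we obtain $p(\bm u_i \mid \bm u_0) \to (2\pi)^{-n/2} \exp(-\tfrac 1 2 \|\bm u_i\|^2)$, the density of $\mathcal{N}(\bm 0,\bm I)$.

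Next I would recover the marginal using the law of total probability,
\begin{equation}
    p(\bm u_i) \;=\; \int p(\bm u_i \mid \bm u_0) \, p(\bm u_0) \, d\bm u_0 \, ,
\end{equation}
and pass the limit $i\to\infty$ inside the integral. A clean way to justify this is dominated convergence: for $i$ sufficiently large we have $1-\alpha_i \geq \tfrac 1 2$, and the explicit Gaussian form allows us to bound $p(\bm u_i \mid \bm u_0)$ by a density that depends only on $\bm u_i$ (e.g.\ $(2/(2\pi))^{n/2}$ times an exponential decreasing in $\|\bm u_i\|$), which is integrable against $p(\bm u_0)$. Since $\int p(\bm u_0) d\bm u_0 = 1$, we conclude $p(\bm u_i) \to \mathcal{N}(\bm 0,\bm I)$ pointwise as claimed.

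The main obstacle is the interchange of limit and integral: care is needed because $\bm u_0$ ranges over the unbounded image space and the initial distribution $p(\bm F)$ is only assumed to be a density. The geometric decay of $\alpha_i$ is what saves the argument, since it makes the Gaussian kernel concentrate while its tail stays uniformly controlled. Once this is handled, the remaining steps are algebraic manipulations of the Gaussian form and do not present real difficulty.
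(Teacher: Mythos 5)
Your argument is correct and takes essentially the same route as the paper: both rest on the closed-form transition \eqref{eq:starttot} and on the observation that $\prod_j (1-\beta_j)\to 0$ geometrically because Property~3 keeps $\beta_i$ bounded away from $0$, the paper phrasing this at the level of the random-variable representation $\bm u_i = \sqrt{\alpha_i}\,\bm u_0 + \sqrt{1-\alpha_i}\,\bm \xi$ while you work with the conditional density and pass to the marginal. Your extra dominated-convergence step is a welcome justification the paper leaves implicit; just note that the domination only requires the constant bound $\left(2\pi(1-\alpha_i)\right)^{-n/2} \le \pi^{-n/2}$ (integrable against $p(\bm u_0)$), since the exponential factor cannot be bounded uniformly in $\bm u_0$ by something decaying in $\|\bm u_i\|$.
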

\begin{proof}
    The statement follows from 
    Eq.~\eqref{eq:starttot} according to
    \begin{align}
        \bm u_i \;=\; \underbrace{\sqrt{\prod_{j=0}^i 
        (1-\beta_j)}}_{\stackrel{i 
                \to
                \infty}{\to 0}} \, \bm u_0 + 
        \underbrace{\sqrt{1-\prod_{j=0}^{i} 
                (1-\beta_j)}}_{\stackrel{i \to \infty}{\to 1}} 
               \, \bm 
        \xi 
    \end{align}
    with $\bm \xi$ from $\mathcal{N}(\bm 0, \bm I)$. \qed
\end{proof}

Note that classical scale-spaces, e.g. those resulting from diffusion 
processes~\cite{We97}, converge to a flat image as the state of least 
information. 
However, 
in a generalised setting, probabilistic diffusion still follows the scale-space 
idea of gradual simplification by systematic removal of information. 

\subsection{PDE Formulation and Reverse Process}
\label{sec:backwards}

In Section~\ref{sec:properties}, we have established that probabilistic 
diffusion in its 
Markov formulation constitutes a generalised scale-space. Like many existing 
scale-spaces, probabilistic diffusion can also be expressed in a PDE 
formulation. 
Feller~\cite{Fe49} has shown a connection of a Markov process of type 
\eqref{eq:density_fwdtrajectory} if
the stochastic moments
\begin{align}
    m_k(\bm u_t, t) \;=\; \lim_{h \rightarrow 0} \frac{1}{h} \int p(\bm u_{t+h} 
    , 
    \bm u_t) (\bm u_{t+h} - \bm u_t)^k d \bm u_{t+h}
\end{align}
exist for $k\in \{1,2\}$. In this case the probability density $p(\bm u_\tau, 
\bm u_t)$ with $\tau < t$ is a solution to the partial differential equation
\begin{align}
    \label{eq:fellerfwd}
    \frac{\partial}{\partial t} p \;=\; \frac{1}{2} 
    \frac{\partial}{\partial \bm u_t \partial \bm u_t} \left(m_2(\bm u_t, t) 
    p\right) 
    + 
    \frac{\partial}{\partial \bm 
    u_t} \left( m_1(\bm u_t, t) p \right) \, .
\end{align}
This is a drift-diffusion equation with a drift coefficient $m_1$. 

A crucial component for the success of probabilistic diffusion is the 
counterpart of the aforementioned forward process, the \emph{backward} process. 
Feller~\cite{Fe49} found that if a solution to the PDE \eqref{eq:fellerfwd} 
exists, then it also solves the backward equation
\begin{align}
    \label{eq:fellerback}
    \frac{\partial}{\partial \tau} p \;=\; \frac{1}{2} m_2(\bm u_\tau, \tau) 
    \frac{\partial}{\partial \bm u_\tau \partial \bm u_\tau } p + 
    m_1(\bm u_\tau, \tau) \frac{\partial}{\partial \bm 
            u_\tau}  p \, .
\end{align}
Note that this equation is formulated w.r.t. the earlier time $\tau$, thus 
yielding a backward perspective where transitions from $t$ to $\tau$ are 
considered.
Sohl-Dickstein et al.~\cite{SMDH13} use the fact that this reverse 
process has a very similar form compared to the forward process. It starts with 
the Gaussian noise distribution $\mathcal{N}(\bm 0, \bm I)$ and converges to 
$p(\bm F)$. However, in contrast to the forward process, the mean and standard 
deviation of the Gaussian transition probabilities are unknown. These 
parameters are learned with a neural network such that the steady state 
minimises the cross entropy to $p(\bm F)$. 
Details on how to find the reverse process have been the topic of many 
publications and have been refined considerably compared to the original 
publication of Sohl-Dickstein et al.~\cite{SMDH13}. Since this is not the focus 
of our work, we refer to \cite{HJA20,KSBH21,RBLE+22,SDME21} for more details. 

The reverse process can sample new images from the distribution 
$p(\bm F)$ that are not part of the training data. Additionally, this 
probability distribution can be conditioned with side information. Providing a 
textual description of the image content specifies the sampling, thus creating 
a text-to-image approach~\cite{RBLE+22}. Similarly, inpainting can be implemented by 
using known image parts as side information~\cite{RBLE+22,SMDH13}.


\begin{figure}[!h]
    \begin{center}
    \newlength{\imgwidth}
    \setlength{\imgwidth}{0.19\textwidth}
    \begin{tabular}{ccccc}
        \multicolumn{5}{c}{(a) probabilistic diffusion}\\[0.5mm]
        \includegraphics[width=\imgwidth]{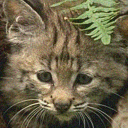}
        &
        \includegraphics[width=\imgwidth]{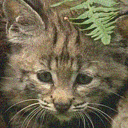}
        &
        \includegraphics[width=\imgwidth]{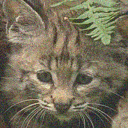}
        &
        \includegraphics[width=\imgwidth]{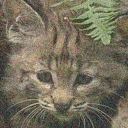}
        &
        \includegraphics[width=\imgwidth]{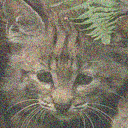}
        \\[-0.5mm]
        $i=0$ & $i=1$ & $i=2$ & $i=4$ & $i=8$ \\[0.25mm]
        \includegraphics[width=\imgwidth]{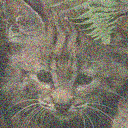}
        &
        \includegraphics[width=\imgwidth]{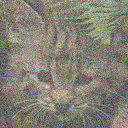}
        &
        \includegraphics[width=\imgwidth]{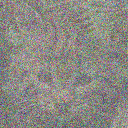}
        &
        \includegraphics[width=\imgwidth]{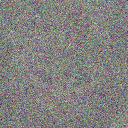}
        &
        \includegraphics[width=\imgwidth]{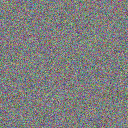}
        \\[-0.5mm]
        $i=32$ & $i=128$ & $i=512$ & $i=2048$ & $i=8192$ \\[0.25mm]
        
        \hline\\[-2mm]  
        \multicolumn{5}{c}{(b) osmosis}\\[0.5mm]
        \includegraphics[width=\imgwidth]{images/original.png}
        &
        \includegraphics[width=\imgwidth]{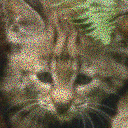}
        &
        \includegraphics[width=\imgwidth]{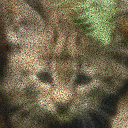}
        &
        \includegraphics[width=\imgwidth]{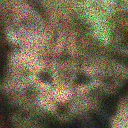}
        &
        \includegraphics[width=\imgwidth]{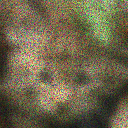}

       \\[-0.5mm]
$i=0$ & $i=1$ & $i=2$ & $i=4$ & $i=8$ \\[0.25mm]

        \includegraphics[width=\imgwidth]{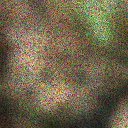}
        &
        \includegraphics[width=\imgwidth]{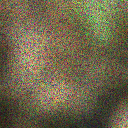}
        &
        \includegraphics[width=\imgwidth]{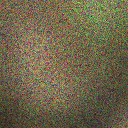}
        &
        \includegraphics[width=\imgwidth]{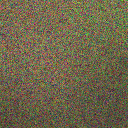}
        &
        \includegraphics[width=\imgwidth]{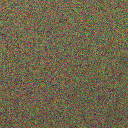}
\\[-0.5mm]
$i=32$ & $i=128$ & $i=512$ & $i=2048$ & $i=8192$ \\[0.25mm]

    \end{tabular}
\end{center}%
\vspace{-5mm}
    \caption{\label{fig:scalespaces} The probabilistic diffusion trajectory (a)
    was obtained with $\beta_t=0.02$ and $i$ indicates the number of steps. The 
    osmosis evolution (b)  uses time 
    step size  $\tau=1$ and a canonical drift vector field of a  Gaussian noise 
    image.  
}
\end{figure}


\section{Probabilistic Diffusion Models and Osmosis}
\label{sec:discussion}

After briefly reviewing osmosis filters, we 
establish connections to probabilistic diffusion via the PDE formulation, 
highlighting similarities and differences.

\subsection{Continuous Osmosis Filtering}
\label{sec:osmosis}

Consider an initial grey value image $f : \Omega \rightarrow \mathbb{R}_+$ 
that maps coordinates from the image domain  $\Omega \subset \mathbb{R}^2$  to 
positive grey values. Osmosis describes the evolution of $u: \Omega \times 
[0,\infty) \rightarrow \mathbb{R}_+$ over time $t$, starting with $f$ at 
$t=0$.  Colour images are covered by channel-wise processing. Besides the 
initial image, the so-called \emph{drift vector field} $\bm d: 
\Omega \rightarrow \mathbb{R}^2$ has a decisive impact on the image evolution 
of $u(\bm x,t)$ which is determined by the PDE~\cite{WHBV13}
\begin{align}
    \partial_t u &\;=\; \Delta u - \Div(\bm d u) & \quad \text{on } 
    \Omega 
    \times (0,T] \, .  \label{eq:ospde}
\end{align}
Reflecting boundary conditions prevent transport across the image boundaries.
For $\bm d = \bm 0$, Eq.~\eqref{eq:ospde} describes a homogeneous 
diffusion process \cite{Ii62}, 
which smoothes the image over time. The drift component  
$-\Div(\bm d u) = -\partial_x (d_1 u) - \partial_y (d_2 u)$  
specifies local asymmetries in exchange of data between pixel cells. This makes 
it a valuable design tool for filters in visual computing.

The PDE formulation of Feller~\cite{Fe49} for probabilistic diffusion from 
Eq.~\eqref{eq:fellerfwd} closely resembles the osmosis 
equation~\eqref{eq:ospde}. Consider the 1-D version of the osmosis 
equation~\eqref{eq:ospde}:
\begin{align}
    \partial_t u &\;=\;  \partial_{xx} u - \partial_x (d u)  & \quad 
    \text{on } 
    \Omega \subset{\R}
    \times (0,T] \, .  \label{eq:1dospde}
\end{align} 
A structural comparison to Eq.~\eqref{eq:fellerfwd} reveals that the evolving image 
$u$ corresponds to the evolving probability density~$p$. In both cases, $\partial_t$ 
is the derivative w.r.t. the time variable of the image evolution. The spatial 
derivative 
$\partial_x$ corresponds to the derivative $\partial_{\bm u_t}$ w.r.t. positions of 
individual particles in the probabilistic diffusion model (see 
Section~\ref{sec:differences}). Hence, we can identify $m_2$ with a diffusivity which 
is set to one in the linear osmosis equation. The moment 
$m_1$ 
corresponds to the scalar drift term $d$ in Eq.~\eqref{eq:1dospde}.
Overall, we can interpret probabilistic diffusion as a 1-D osmosis PDE on the 
probability density $p$. Thus, there is a close conceptual connection between 
both methods. 

In this paper, we consider only those osmosis properties that are most 
relevant for our comparison to probabilistic diffusion. For the more  
theoretical details we refer to Weickert et al.~\cite{WHBV13} in the continuous 
and Vogel et al.~\cite{VHWS13} in the discrete setting. Additional theoretical 
results and their proofs can be found in the Ph.D. thesis of  
Schmidt~\cite{Sc18}. 

\subsection{Visual Comparison}

In the following, we compare structural properties of osmosis filtering and 
probabilistic diffusion models. In order to illustrate our observations, we provide a 
visual comparison of probabilistic diffusion to an osmosis evolution in 
Fig.~\ref{fig:scalespaces}. Since the probabilistic diffusion model defines an 
evolution of probability densities, the visual comparison considers a single, exemplary
trajectory. It acts as a representative for the effects on the level of individual 
images.

Such a trajectory of the probabilistic diffusion model can be directly 
obtained from the forward process. It is straightforward to implement following the 
update scheme from Eq.~\eqref{eq:forwardrandomvar}, which is already formulated in the 
discrete setting.

We discretise the continuous osmosis model from Section~\ref{sec:osmosis} 
in the same way as Vogel et al.~\cite{VHWS13}. In particular, we use an implicit 
scheme with a stabilised BiCGSTAB solver \cite{Me15}. In order to obtain the osmosis 
evolution, we use a standard normal noise sample 
$v$ to define the so-called canonical drift vector field $\bm d = \frac{\grad 
    v}{v}$. Weickert et al.~\cite{WHBV13} have shown that this yields a steady 
state $w=\frac{\mu_f}{\mu_v} v$ where $\mu_u$ denotes the average grey value of 
an image $u$. 

Note that our visual comparison is not intended as a full-scale, systematic evaluation 
of both approaches, which is beyond the scale of this paper.

\subsection{Common Structural Properties}
\label{sec:common}

Due to the observation that both osmosis and probabilistic diffusion rely on 
drift-diffusion, they share some theoretical properties and yield similar image 
evolutions in Fig.~\ref{fig:scalespaces}. 
Starting with the same initial image $f$ from the Berkeley segmentation
dataset \emph{BSDS500}~\cite{AMFM11} both processes transition to 
noise. 

We observe that a comparable amount of information of the initial image is removed 
over time. Schmidt~\cite{Sc18} has shown that Lyapunov functionals can be 
defined for osmosis. In particular, the relative entropy of $u$ w.r.t. $w$,  
\begin{align}
    L(t) \;:=\; -\int_\Omega  u(\bm x,t)\ln \left( \frac{u(\bm x,t)}{w(\bm 
    x)}\right) 
    \, 
    d\bm x \, ,
\end{align}
is increasing in $t$. It indicates that the information w.r.t. to the 
steady state is increasing. This conceptually resembles the conditional entropy 
from Section~\ref{sec:probdiffscale}, which is formulated from the point of 
view of the initial distribution instead.

Finally, the backward process for probabilistic diffusion has a conceptual 
counterpart in osmosis. By swapping  the roles of initial 
and guidance image, osmosis can also transition from noise to an image. 
However, this would not yield the same intermediate scales as the osmosis 
evolution in  Fig.~\ref{fig:scalespaces}.


\subsection{Differences}
\label{sec:differences}

In contrast to probabilistic diffusion, osmosis is deterministic and thus has 
different applications. Osmosis is applied to individual images only, but also 
does not require any training data to perform tasks like image editing 
\cite{DMM18,VHWS13,WHBV13} or shadow removal \cite{DMM18,PCCSW19,WHBV13}. 
Due to its stochastic nature, probabilistic diffusion is a natural choice for 
generating new images from given user prompts such as text. However, it can 
also be used for the restoration of individual images, e.g. restoring large 
missing image parts with matching generated content~\cite{RBLE+22,SDME21}.

The second major difference lies in the physical interpretation of images. 
Osmosis directly models the macroscopic aspects of propagation in the 2-D image 
domain $\Omega$, where pixel values can be seen as concentrations in the 
respective area. In contrast, the implementation of probabilistic diffusion via 
a Markov process models individual particles with 1-D Brownian motion. The 
pixel values are thus positional data instead, and propagation occurs in the 
co-domain. Effects of these conceptual differences are visible in 
Fig.~\ref{fig:scalespaces}. Osmosis gradually reduces the features of the cat 
by smoothing in the two-dimensional image domain. In contrast, the independent 
pixel-wise Brownian motion of probabilistic diffusion keeps edges of the 
original image intact until they are drowned out by noise.

Considering the steady state, we see that osmosis preserves the average colour 
value of the initial image, while samples from the standard normal distribution 
in probabilistic diffusion always have zero mean. Note that for osmosis, we 
only 
receive a noise steady state since we specified the guidance image accordingly 
-- we can also use arbitrary other images to guide osmosis to receive 
non-trivial steady states. In contrast, probabilistic diffusion in the sense of 
Sohl-Dickstein et al.~\cite{SMDH13} always converges to a tractable 
(noise) distribution, e.g. the standard normal distribution.  


\section{Conclusions and Outlook}
\label{sec:conclusion}

Investigating probabilistic diffusion from the point of view of scale-space 
research yields surprising results: Probabilistic diffusion defines an 
evolution that resembles traditional scale-spaces in important aspects such as 
causality and gradual simplification. However, probabilistic diffusion acts on 
distributions rather than on individual images and removes information by 
creating chaos instead of uniformity. Thus, it does not converge to a flat 
steady state, but to a noise distribution. Theoretical and practical results 
allow bi-directional 
traversal of this scale-space, which is rare in deterministic scale-spaces.

Interestingly, probabilistic diffusion can be seen as the stochastic 
counterpart to classical PDE-based, deterministic osmosis filtering. In the 
future, we plan to investigate this connection in more detail. Moreover, 
recognising probabilistic diffusion as a scale-space implies potential 
applications that make use of intermediate results of the evolution instead of 
only relying on steady states.

\medskip
\noindent
\textbf{Acknowledgements:} I thank Joachim Weickert, Karl Schrader, and 
Kristina Schaefer for fruitful discussions and advice.

\medskip
%
%
%
\bibliographystyle{splncs04}
\bibliography{bib.bib}

\end{document}